\tikzset{
  invisible/.style={opacity=0},
  dots/.style={state,draw=none},
  visible on/.style={alt={#1{}{invisible}}},
   alt/.code args={<#1>#2#3}{%
     \alt<#1>{\pgfkeysalso{#2}}{\pgfkeysalso{#3}} % \pgfkeysalso doesn't change the path
  },
}
\newcommand\norm[1]{\left\lVert#1\right\rVert}
\newtheorem{thm}{Theorem}%[section]
\newtheorem{prop}[thm]{Proposition}
\theoremstyle{definition}
\newtheorem{defn}{Definition}[section]
\newtheorem{assump}{Assumption}[section]
\newtheorem{rem}{Remark}
\newcommand{\definetrim}[2]{%
  \define@key{Gin}{#1}[]{\setkeys{Gin}{trim=#2,clip}}%
}
\newcommand\fs@betterruled{%
  \def\@fs@cfont{\bfseries}\let\@fs@capt\floatc@ruled
  \def\@fs@pre{\vspace*{5pt}\hrule height.8pt depth0pt \kern2pt}%
  \def\@fs@post{\kern2pt\hrule\relax}%
  \def\@fs@mid{\kern2pt\hrule\kern2pt}%
  \let\@fs@iftopcapt\iftrue}
\title{\LARGE \bf
An Online Multiobjective Policy Gradient for Long-run Average-reward Markov Decision Process}
\author{Rahul Misra$^{*}$, Manuela L. Bujorianu, and Rafa\l \hspace{0.01cm} Wisniewski% <-this % stops a space
\thanks{*This work was supported by Poul Due Jensens Fonden project SWIfT}% <-this % stops a space
\thanks{Department of Electronic Systems, Automation and Control, Aalborg
University, Fredrik Bajers Vej 7C, 9220 Aalborg East, Denmark
        {\tt\small rmi@es.aau.dk, lmbu@es.aau.dk, raf@es.aau.dk}}%
}
\begin{document}

\maketitle
\thispagestyle{empty}
\pagestyle{empty}

%%%%%%%%%%%%%%%%%%%%%%%%%%%%%%%%%%%%%%%%%%%%%%%%%%%%%%%%%%%%%%%%%%%%%%%%%%%%%%%%
\begin{abstract}
We propose a reinforcement learning (RL) framework for multi-objective decision-making, where the agent seeks to optimize a vector of rewards rather than a single scalar value. The objective is to ensure that the time-averaged reward vector converges asymptotically to a predefined target set. Since standard RL algorithms operate on scalar rewards, we introduce a dynamic scalarization mechanism guided by Blackwell’s Approachability Theorem. This theorem enables adaptive updates of the scalarization vector to guarantee convergence toward the target set. Assuming ergodicity, the Markov chain induced by the learned policies admits a stationary distribution, ensuring all states recur with finite return times. Our algorithm exploits this property by defining an inner loop that applies a policy gradient method (with baseline) between successive visits to a designated recurrent state, enforcing Blackwell’s condition at each iteration. An outer loop then updates the scalarization vector after each recurrence. We establish theoretical convergence of the long-run average reward vector to the target set and validate the approach through a numerical example.
\end{abstract}
%%%%%%%%%%%%%%%%%%%%%%%%%%%%%%%%%%%%%%%%%%%%%%%%%%%%%%%%%%%%%%%%%%%%%%%%%%%%%%%%
\section{Introduction}

Many real-world sequential decision-making (or control) problems are inherently multiobjective, requiring controllers to optimize several, often conflicting, criteria simultaneously. Traditional Reinforcement learning (RL) frameworks typically assume a scalar reward signal, which simplifies the learning process but fails to capture the complexity of environments where trade-offs between objectives are essential. Thus, in recent years, there has been an increasing focus on applying RL in domains that include multiple objectives or multiple reward signals \cite{nagy2023ten,li2020deep}. The existing literature on multi-objective RL-based control can be divided into two categories. The first approach is to design a scalar reward function that incorporates multiple objectives by scalarizing a vector reward function in a certain user-defined manner, followed by verification using numerical simulations. This can be seen in the papers \cite{li2020deep, huang2024reinforcement} where the authors design a domain-specific reward function and then adjust the scalarizing weights based on the result of simulations. The second approach is to treat the additional objectives as constraints and learn the solution for the Constrained Markov Decision Process (CMDP) \cite{altman1999constrained}. The drawbacks of the first approach are the following: Firstly, the reward function design is a very domain-specific task and requires thorough knowledge of the domain. Secondly, if the RL task requires more samples (typical for large-scale systems with many states) than the approach of tuning weights after simulations can be computationally very expensive. Lastly and perhaps more importantly, there are no guarantees of optimality when using this approach, as there may exist a better set of scalarizing weights compared to what the user chose by trial and error. The primary drawback of the second approach is as follows: Firstly, Bellman's principle of optimality is not always valid for CMDPs \cite{haviv1996constrained, misra2024principle}, and therefore, only model-based RL methods might be applicable (here, the controller first learns the underlying model and then solves a linear program \cite{altman1991adaptive}). Secondly, this method is not robust to adversarial disturbances as the disturbance can be chosen in such a way that the associated Linear program is infeasible \cite{avrachenkov2016singularly}. 
\par In this paper, we investigate another approach for handling multiple objectives based on Blackwell's Approachability Theorem \cite{blackwell1956analog}. The central idea is to steer the long-run time average of the reward vector to a predefined Target set (that can be defined such that it coincides with the Pareto optimal set). The steering is done as follows: First, we calculate the projection of the time-averaged reward vector on the Target set. Thereafter, we calculate the distance between the time-averaged reward vector and its projection on the Target set. This distance is normalized to obtain the unit vector pointing towards the Target set. Blackwell's Approachability Theorem \cite{blackwell1956analog} provides the necessary and sufficient condition (in the case of convex sets) that the control policy must satisfy to \emph{Approach} the Target Set despite any adversarial disturbance. This method is related to the concepts of regret minimization, calibration, online optimization, and correlated equilibrium \cite{perchet2013approachability, misra2023robust}. Despite the usefulness of Blackwell's Approachability Theorem, it should be noted that this Theorem is not directly applicable to systems involving state dynamics, as the Blackwell condition assumes a single-state recurrent system. Shimkin \cite{shimkin1993guaranteed} and Mannor \cite{mannor2004geometric} extended Blackwell's Approachability Theorem to multiple states in a finite multi-objective Markov Game setting. They assumed that there exists a special recurrent state with finite recurrence time. This allowed them to obtain Approachability guarantees similar in spirit to Blackwell's original result \cite{blackwell1956analog}. Milman \cite{milman2006approachable} generalized the work of \cite{shimkin1993guaranteed,mannor2004geometric} by removing the recurrence assumption, but the resulting policy computation required knowledge of some constants and was not straightforward. Nevertheless, an RL algorithm has been designed based on Milman's approach in \cite{misra2025finding}. Another approach in this direction was to restrict the zero-sum game setting to Stackelberg models \cite{kalathil2017approachability}. Here one player is the leader who reveals his/her move and the second player is the follower who best responds to the leaders preceding move. The resulting Theorem is closer to the original Blackwell's Approachability Theorem without any non-trivial policy computations. However, it is restricted to Stackelberg setting.
\par \emph{Contributions}: Our approach in this paper is closely related to \cite{mannor2004geometric}. However, it differs in the following ways: Firstly, the RL algorithm of \cite{mannor2004geometric} requires running $J$ different RL algorithms in parallel, where $J$ is a possible steering direction. Approachability to the Target set is guaranteed if $J$ is sufficiently large, and this is obviously computationally impractical. In contrast, we show Approachability using a single Policy gradient scheme. Secondly, we extend the method of \cite{mannor2004geometric} to systems with continuous state space via function approximation. Our overall approach of going from a multi-objective $2$ player zero-sum Markov Game to a single agent Markov decision process is summarized in the following flowchart \ref{fig:flowchart}. 
\begin{figure}[ht]
    \centering
    \begin{tikzpicture}[
        node distance=16mm,
        every node/.style={font=\small},
        box/.style={
            rectangle, rounded corners, draw=black, very thick,
            align=center, minimum width=48mm, minimum height=10mm, fill=blue!3
        },
        >={Latex[length=2.4mm]},
        lab/.style={align=center, text width=48mm}
        ]
        \node[box] (game) {Multi-objective\\ 2-player zero-sum Markov game};
        \node[box, below=of game] (momdp) {Single-player\\ multi-objective MDP};
        \node[box, below=of momdp] (mdp) {Single-agent\\ (scalarized) MDP};

        \draw[->, very thick] (game) -- node[lab, right=2mm]
        {Approachability to a target set\\ Robust to Player $2$}
        (momdp);

        \draw[->, very thick] (momdp) -- node[lab, right=2mm]
        {Blackwell's theorem $\Rightarrow$ scalar reward $r = \langle \mathbf{r}, \lambda \rangle$}
        (mdp);
    \end{tikzpicture}
    %\caption{Caption}
    \label{fig:flowchart}
\end{figure}
The rest of the paper is organized as follows: In the next subsection, we introduce some common notation, followed by the precise problem formulation in Section II. The concepts of recurrent times, differential returns, policy gradient theorem, and Blackwell's Approachability Theorem are introduced in Section III. Based on these concepts, the Algorithm is also presented in Section III with convergence guarantee. We simulate the algorithm on a numerical example in Section IV and finally, we provide concluding remarks and possible future work in Section V.  

\subsection{Notation}
\begin{itemize}
    \item $\Delta(A)$ denotes a simplex on the finite set $A = \{ a_1, \cdots, a_n \}$ i.e. for all $ a_i \in A$, $a_i \geq 0$, and $\sum^n_{i = 1}a_i = 1$.
    \item The notation $\mathbb{P}^a_b$ represents a probability measure induced by $a, b$ and $\mathbb{E}^a_b$ is the corresponding expectation operator. 
    \item $\langle a,b \rangle$ denotes the standard inner product between the vectors $a$ and $b$. 
    \item Distance between point and set is defined as $D(a,T) := \inf_{b \in T} \norm{a - b}$, where $\norm{\cdot}$ is the standard Euclidean norm. 
    \item $\mathbf{r}$ represents a vector reward and $r$ represents a scalar reward.
    \item $\pi$ without superscript represents the joint policy i.e. $\pi = (\pi^1, \pi^2)$.
    \item $\Pi$ denotes the projection of point $a$ on set $B$ i.e. $\Pi_B(a) = \arg\min_{b \in B} \norm{a - b}$.
\end{itemize}

\section{Problem formulation}
We consider a Zero-Sum Markov Game from Player $1$'s perspective (i.e. Player $2$'s actions can be arbitrary) that is defined as $\mathcal{M} := ( X, U^1, U^2, {P}, R, T, \mu )$, where $X$ is the state space, $U^1$ and $U^2$ are the control action spaces, $P: X \times U^1 \times U^2 \to \Delta(X)$ is the transition probability, $R:X \times U^1 \times U^2 \to \mathbb{R}^k$ is the reward vector consisting of $k$ objectives, $T \in \mathbb{R}^k$ is the predefined Target set for Player $1$, and $\mu \in \Delta(X)$ is the initial distribution of states. The state space can be a set of finite states or a continuous state space, as we can use parametric representations that allow us to generalize the developed algorithms to possibly infinite states. Similarly, the control action spaces can also be continuous, depending on the parameterization; however, they must be convex and compact in this case. At time $t \in \{0,1,2,\cdots\}$, each player observes the state $x_t \in X$, picks a control actions $u^1_t \in U^1$ (or $u^2_t \in U^2$). The joint control action taken by the players at time $t$ is denoted by $u_t := (u^1_t, u^2_t)$. The players then observe a sample of the reward vector $\mathbf{r}_t := R(x_t,u^1_t, u^2_t)$, and transition to the new state $x_{t+1} \in X$ based on the transition dynamics $P$. The functional form of transition dynamics $P$ and the reward function $R$ are unknown to both players; instead, the players observe the history $h^i_t = x_0, u^i_0, x_1, u^i_1, \cdots, x_t$, where $i = \{1, 2\}$ denotes the player. Thus, we have a Reinforcement learning problem as each player's goal is to choose control actions that reinforce the desired outcome (defined in \ref{defn:Approachability}). The set of all possible histories observed by player $i$ up to time $t$ is denoted by $H^i_t := X \times (U^i \times X)^t$. The joint history of the game $H_t$ combines both players' partial histories in the following way, $H_t := X \times (U^1 \times U^2 \times X)^t$. Let $\Bar{\mathbf{r}}_t$ be the time-averaged reward vector defined as,
\begin{equation}
    \Bar{\mathbf{r}}_t := \frac{1}{t}\sum^t_{n = 1} R(x_n,u_n)
\end{equation}
The goal of Player $1$ is to steer the time-averaged reward vector $\Bar{\mathbf{r}}_t$ such that it \emph{Approaches} a desired Target set $T \in \mathbb{R}^k$ (this is made more precise after we define the control policies). We define the control policy for Player 1 (analogously for Player 2) as $\pi^1_{\theta}: H^1_t \to \Delta(U^1)$. Note that $\pi^1$ is parameterized by $\theta \in \mathbb{R}^l$. This means that instead of searching for the optimal policy in the entire state space $X$, we strive to find the best possible optimal policy in the reduced space of dimension $l$. This technique is referred to as function approximation, as we generally specify the policy as a function that approximates the best possible policy in the reduced $l$-dimensional space. For the rest of the paper, we will drop $\theta$ from the policy $\pi^i$ for the sake of notational simplicity. The reader should assume that the policies are always parameterized by $\theta$'s unless otherwise specified. Each joint control policy $\pi := (\pi^1, \pi^2)$, and initial distribution $\mu$, induces a unique probability measure $\mathbb{P}^\mu_\pi$ (with the corresponding expectation operator $\mathbb{E}^\mu_\pi$) on the space of (infinite) joint histories $H_\infty = X \times (U^1 \times U^2 \times X)^\infty$ \cite{jaskiewicz2018zero}. We will now define the optimality criteria based on the long-run average reward vector. However, before defining the long-run average reward vector, we need an assumption regarding the existence of the stationary distribution of the Markov chain induced by the joint policy $\pi$. This assumption will ensure the existence of the limit in \eqref{eq:limit_average}. 
\begin{assump}[Ergodicity]\label{assump:Ergodicity}
    Every joint policy $\pi=(\pi^1,\pi^2)$ induces a stationary distribution $d^{\pi}(x)$. The stationary distribution $d^{\pi}(x)$ is unique and exists for all states $x \in X$, and is defined as follows,
    \begin{equation}
        d^{\pi}(x) = \lim\limits_{t \to \infty}\mathbb{P}[x_{t} = x \mid x_0, \pi]
    \end{equation}
\end{assump}
The stationary distribution satisfies the following stationarity property, which is stated next. We begin by defining the simplified notation for transition probability $P^\pi$ and reward vector $R^\pi$ while following a joint policy $\pi$. For a given joint policy $\pi$, we define a state transition matrix $P^{\pi}$ between two states $x$ and $x'$ as, 
\begin{equation*}
P^{\pi}(x,x')=\sum_{u^1} \sum_{u^2}\pi^1(u^1 \mid x) \pi^2(u^2 \mid x)P(x' \mid x, u^1, u^2).
\end{equation*}
Similarly, we can define $R^{\pi}(x) := \sum_{u}\pi(u\mid x) R(x,u)$ as the reward vector generated by following the joint policy $\pi$. Due to Assumption \ref{assump:Ergodicity}, $d^\pi$ satisfies the following stationarity condition,
\begin{equation}\label{eq:Invariance}
    d^{\pi}P^{\pi} = d^{\pi}.
\end{equation}
We can now define the long-run average reward vector as follows. Fix an initial distribution $\mu$, then the long-run average reward vector for a given joint policy $\pi$ is
\begin{align}\label{eq:limit_average}
    \Bar{\mathbf{r}}(\pi) &:= \lim_{t \to \infty} \frac{1}{t} \mathbb{E}^\mu_\pi\left[ \sum^{t}_{n = 0} R(x_n,u_n) \right] \nonumber\\
    &= \lim_{t \to \infty} \frac{1}{t} \sum^{t}_{n = 0}\mu \cdot (P^{\pi})^n R^{\pi} \nonumber\\
    &= \mathbb{E}_{d^\pi} \left[ R(x,u) \right],
\end{align}
where $d^\pi$ is the stationary distribution corresponding to the long-run behavior of the Markov chain induced by the joint policy $\pi$, and the last equality is because
\begin{equation*}
    \lim_{t \to \infty} \frac{1}{t} \sum^{t}_{n = 0}\mu \cdot (P^{\pi})^n=d^{\pi}, 
\end{equation*}
where the limit is in the sense of Cesàro time average limit \cite{filar2012competitive}. In words, asymptotically, the transient effects of the starting states and decisions will be dominated by the long-run behavior characterized by the stationary distribution \eqref{eq:Invariance}. We can now define Player 1's objective more precisely as,
\begin{defn}\label{defn:Approachability}
    Given the initial distribution $\mu$, a policy $\pi^{1\star}$ of Player $1$ \emph{Approaches} the target set $T \subset \mathbb{R}^k$ if
    \begin{equation}\label{eq:Optimality}
        \lim_{t \to \infty} D(\Bar{\mathbf{r}}_t, T) = 0 \quad \mathbb{P}^\mu_\pi \text{ a.s. } \text{for any } u^2 \in U^2 
    \end{equation}
\end{defn}
Note that the condition \eqref{eq:Optimality} must hold for arbitrary control actions of Player $2$, which implies that we can focus on designing an algorithm for Player $1$ that is robust to bounded adversarial actions. 

\section{Algorithm for Approaching the Target set}

As we consider a vector of rewards (due to the multi-objective nature of our problem), standard Reinforcement learning methods are not applicable as they are designed for scalar rewards. We therefore choose to design an algorithm with two loops. The outer loop will adaptively scalarize the reward vector, while the inner loop will consist of a standard Reinforcement Learning algorithm acting on the scalarized reward. For the inner loop, we will apply a Policy Gradient algorithm, or more precisely, an actor-critic algorithm, since we will also be estimating the Temporal Difference (TD) error. As we study long-run averages, we don't have terminating episodes, in contrast to standard Reinforcement Learning algorithms. Instead, we need to artificially create time instances where we can update the desired parameters. This can be done using recurrence times that indicate when a predefined state $\Tilde{x}$ has been observed in the trajectory. Let $n$ denote the number of times the state $\Tilde{x}$ has been observed. We define $\tau_0 := 0$ and then corresponding to $n > 0$, let $\tau_n$ be the recurrent time for a state $\Tilde{x}$ defined as,
\begin{equation}
    \tau_n = \min \{ t > \tau_{n-1} \mid x_t = \Tilde{x} \}, \quad\text{where } n > 0
\end{equation}
As a consequence of Assumption \ref{assump:Ergodicity}, we have finite-time recurrence $\tau_n$ for the state $\Tilde{x}$, or more precisely
\begin{equation*}
    \mathbb{E}_\pi \left [ \tau_n \right] < t' \quad \forall \pi, 
\end{equation*}
where $t'$ is a finite number. Note that owing to the stochastic policies and stochastic transition probabilities, $\tau_n$ will be a random variable. Furthermore, note that we have dropped $\mu$ from the above expectation. This is because for the rest of the paper, we will be focusing on the returns from the recurrent state $\Tilde{x}$. The average reward vector up to finite recurrent time $\tau_n$ is 
\begin{equation}
    \eta_n(\pi) := \frac{\mathbb{E}_\pi\left[ \sum^{\tau_n - 1}_{t = 0} R(x_t, u^1_t, u^2_t) \right]}{\mathbb{E}_\pi\left[ \tau_n \right]}, \quad\text{where } n > 0. 
\end{equation}
% \mnl{Maybe here you put the index $n$, as $\eta_n$} \rmi{Done}
We are now ready to apply the reinforcement learning framework of \cite{sutton2018reinforcement}. Firstly, we need to define the return $G_t$ that quantifies the long-run average performance of the policy. However, we cannot directly use the definition of the standard Differential Returns $G_t$ presented in \cite{sutton2018reinforcement}. This is because \cite{sutton2018reinforcement} considers a scalar reward for a single player, while we are studying a reward vector that is affected by the decisions of both players. Therefore, we need to do some sort of adaptive scalarization of the reward vector and that is presented next. Let $\lambda \in \mathbb{R}^k$ be a unit vector in the reward vector space. We define a projected game at time $t$ as $r :=\langle \mathbf{r}, \lambda \rangle$, where $r$ is the scalarized reward in the direction of $\lambda$ (this unit vector is defined precisely in \eqref{eq:Unit_vector}). Similarly, the scalarized long run average reward for the projected game can be defined as $\Bar{r}(\pi) := \langle \Bar{\mathbf{r}}(\pi), \lambda \rangle$. Thus, we can now define the Differential return starting at time $t$, as
\begin{align}
    G_t &:= {r}_{t+1} - \Bar{{r}}(\pi) + {r}_{t+2} - \Bar{{r}}(\pi) + {r}_{t+3} - \Bar{{r}}(\pi) + \cdots \nonumber, \\
    &= \sum^{\infty}_{k=t+1} ( {r}_k -\Bar{{r}}(\pi))
\end{align}
The Differential return $G_t$ quantifies the improvement over the long-run average return $\Bar{{r}}(\pi)$ due to players decisions at time $t+1$ being $u^1_{t+1}, u^2_{t+1}$ with state being $x_{t+1}$, at time $t+2$ being $u^1_{t+2}, u^2_{t+2}$ with state being $x_{t+2}$, and so on. Thus, it follows that if $G_t \approx 0$, then no benefit or loss was gained compared to the long-run average reward. Based on the definition of $G_t$, the Differential Value function for the game is defined as 
\begin{equation}\label{eq:Differential_Value_Function}
    V_\pi(x) := \mathbb{E}_\pi \left[ G_t \mid x_t = x \right] 
\end{equation}
The Differential value function satisfies a Bellman-like equation \cite{filar2012competitive} (sometimes referred to as the discrete Poisson Equation \cite{meyn2012markov}) or more precisely,
\begin{equation}
    V_\pi(x) = \mathbb{E}_\pi \left[ {r} - \Bar{{r}}(\pi) + V_\pi(x') \right]
\end{equation}
Generally, the term $g := \mathbb{E}_\pi \left [ \Bar{{r}}(\pi) \right]$ is added on both sides to obtain the Bellman-like equation,
\begin{equation}\label{eq:Poisson}
    V_\pi(x) + g = \mathbb{E}_\pi \left[ {r} + V_\pi(x') \right].
\end{equation}
Here, the term $V_\pi(x)$ quantifies the effect of starting at state $x$ (or the bias to state $x$). Similarly, we can define the $Q$-function (value function given $u_t = u$) as follows,
 \begin{equation*}
    Q_{\pi}(x, u) :=  \mathbb{E}_\pi\left[ G_t \mid x_t = x, u_t = u \right]. 
\end{equation*}
\begin{rem}\label{rem:Joint_policy_notation}
    Recall that the policy $\pi$ and control actions $u$, mentioned in the previous equations, consist of both Player $1$'s component and Player $2$'s component, and Player $2$ is adversarial, that is, Player $1$'s policy should be optimal against any control action of Player $2$. Thus, we will only consider Player $1$'s policy from now on, as Player $1$ is performing the best response to Player $2$'s actions. In such a situation, it is possible to apply Policy iteration like algorithms (see Chapter 5 of \cite{filar2012competitive}). In other words, once Player $2$'s policy is fixed, the Markov Game reduces to a single player Markov Decision Process \cite{zhang2019multi}.
\end{rem}
\begin{rem}
    For notational simplicity and consistency, we shall still write update in terms of the joint policy $\pi = (\pi^1, \pi^2)$ and joint actions $u = (u^1, u^2)$ unless otherwise stated. As stated in Remark \ref{rem:Joint_policy_notation} only Player $1$ is updating his/her policy $\pi^1$ and $u^2$ is fixed (consequently $\pi^2$ is also fixed).  
\end{rem}
As mentioned earlier, we employ function approximation to extend the algorithm to continuous state spaces. The policy $\pi$ is parameterized by a vector $\theta \in \mathbb{R}^l$ and depends on a feature representation of state-action pairs. Specifically, each pair $(x,u)$ is mapped to a feature vector $\phi(x,u) \in \mathbb{R}^l$, and the policy uses the inner product $\langle \theta, \phi(x,u) \rangle$ as its score. Intuitively, the feature vector captures the relevant aspects of the state-action space that influence decision-making. A common example is the softmax policy:
\begin{equation*}
    \pi_\theta(u \mid x) = \frac{\exp(\theta^\top \phi(x,u))}{\sum_{u' \in U} \exp(\theta^\top \phi(x,u'))}, \quad \phi(x,u) \in \mathbb{R}^l, \theta \in \mathbb{R}^l.
\end{equation*}
The parameter $\theta$ is updated using the following gradient ascent rule,
\begin{equation}\label{eq:Policy_Gradient}
    \theta_{t+1} = \theta_t + \alpha \frac{\partial \Bar{r}(\pi)}{\partial \theta}.
\end{equation}
We can now state the Policy gradient Theorem with compatible function approximation that will allow us to obtain the expression for ${\partial \Bar{r}(\pi)}/{\partial \theta}$. 
\begin{thm}[Policy Gradient Theorem \cite{sutton1999policy,sutton2018reinforcement}]\label{thm:PG_theorem}
    The gradient of average reward is given by the following expression,
    \begin{equation}\label{eq:Gradient}
        \frac{\partial \Bar{r}(\pi)}{\partial \theta} = \sum_x d^\pi(x) \sum_{u}\frac{\partial \log \pi_\theta(x, u)}{\partial \theta}\delta(x, u),
    \end{equation}
    where $\delta(x,u)$ is the TD error calculated at time $t$ as follows, 
    \begin{equation}\label{eq:TD_error}
        \delta(x_t,u_t) = r(x_t,u_t) - \hat{g}_{t-1} + \hat{V}_{t-1}(x_{t+1}) - \hat{V}_{t-1}(x_{t}).
    \end{equation}
    Here $\hat{V}$ and $\hat{g}$ is approximated via gradient descent with a compatible function approximation that minimizes error (see \eqref{eq:Value_funcation_approx}, \eqref{eq:TD_update}).
\end{thm}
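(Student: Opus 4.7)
The plan is to differentiate the Poisson equation \eqref{eq:Poisson} and exploit stationarity of $d^\pi$ to eliminate the terms involving the derivative of the value function. Writing the Poisson equation explicitly as
\begin{equation*}
V_\pi(x) + g = \sum_{u}\pi_\theta(u\mid x)\Big[r(x,u) + \sum_{x'} P^\pi(x,x')\,V_\pi(x')\Big],
\end{equation*}
I would apply $\partial/\partial\theta$ to both sides. Using the log-derivative identity $\partial \pi_\theta/\partial\theta = \pi_\theta\,\partial \log \pi_\theta/\partial\theta$ and recognizing the bracketed quantity as $Q_\pi(x,u)$, I rearrange to isolate $\partial g/\partial\theta$:
\begin{equation*}
\frac{\partial g}{\partial \theta} = \sum_u \pi_\theta(u\mid x)\frac{\partial \log\pi_\theta(u\mid x)}{\partial \theta}Q_\pi(x,u) + \sum_{x'}P^\pi(x,x')\frac{\partial V_\pi(x')}{\partial \theta} - \frac{\partial V_\pi(x)}{\partial \theta}.
\end{equation*}

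The next step is to multiply by $d^\pi(x)$ and sum (or integrate) over $x$. By the stationarity condition \eqref{eq:Invariance}, the two $\partial V_\pi/\partial\theta$ terms telescope: $\sum_x d^\pi(x)\sum_{x'} P^\pi(x,x')\,\partial V_\pi(x')/\partial\theta = \sum_{x'} d^\pi(x')\,\partial V_\pi(x')/\partial\theta$, which cancels the subtracted term. This yields the $Q$-function form of the policy gradient,
\begin{equation*}
\frac{\partial \Bar{r}(\pi)}{\partial \theta} = \sum_x d^\pi(x)\sum_u \frac{\partial \log \pi_\theta(u\mid x)}{\partial \theta}\,\pi_\theta(u\mid x)\,Q_\pi(x,u).
\end{equation*}

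To pass from $Q_\pi$ to the TD error $\delta$, I would invoke the baseline-subtraction property: since $\sum_u \pi_\theta(u\mid x)\,\partial \log \pi_\theta(u\mid x)/\partial\theta = \partial/\partial\theta\sum_u \pi_\theta(u\mid x) = 0$, any function of $x$ alone may be subtracted from $Q_\pi$ without changing the gradient. Subtracting $V_\pi(x)$ replaces $Q_\pi$ with the advantage $A_\pi(x,u) = r(x,u) - g + \mathbb{E}[V_\pi(x')\mid x,u] - V_\pi(x)$, and a single-sample realization of this conditional expectation is precisely $\delta(x_t,u_t)$ in \eqref{eq:TD_error} once the true $V_\pi,g$ are replaced by the compatible approximations $\hat V_{t-1},\hat g_{t-1}$.

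The main obstacle I expect is rigorously justifying the substitution of $V_\pi$ and $g$ by the running estimates $\hat V_{t-1}$ and $\hat g_{t-1}$ without biasing the gradient. This is the classical compatibility requirement of Sutton et al.\ \cite{sutton1999policy}: the value approximator must be linear in the features $\nabla \log \pi_\theta(u\mid x)$ and fit $Q_\pi$ by minimizing the mean-squared TD error under $d^\pi$, so that the approximation error is orthogonal to the score function and the baseline identity preserves the gradient in expectation. A secondary technical point is the interchange of derivative with the Cesàro limit defining $\Bar{r}(\pi)$ and with the (possibly uncountable) sum/integral over $x$; under Assumption \ref{assump:Ergodicity} and smoothness of $\pi_\theta$ in $\theta$ this follows from dominated convergence, but it must be acknowledged explicitly for continuous state spaces.
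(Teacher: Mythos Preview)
The paper does not supply its own proof of this theorem; it is imported verbatim from \cite{sutton1999policy,sutton2018reinforcement} and merely invoked. Your derivation---differentiating the Poisson equation \eqref{eq:Poisson}, averaging against $d^\pi$ so that stationarity \eqref{eq:Invariance} cancels the $\partial V_\pi/\partial\theta$ terms, and then using the score-function baseline trick to pass from $Q_\pi$ to the TD error---is exactly the standard argument given in those references, so there is nothing to contrast.
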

This theorem will allow us to sample from the on-policy visited states without knowledge of the stationary distribution. This is because there are no derivatives of the stationary distribution $d^\pi$. Instead, we just have an expectation with respect to the unknown distribution. A sample average of the online data between two recurrences of the state $\Tilde{x}$ can approximate this expectation. This is due to Kac's Theorem, which is stated next.
\begin{thm}[Kac's Theorem \cite{meyn2012markov}]\label{thm:Kac}
    If the Markov chain induced by the joint policy $\pi$ is ergodic and admits a state $\Tilde{x} \in X$ with positive recurrent time $\tau_n$, then if $d^\pi$ is the stationary distribution for the induced Markov chain, then
    \begin{equation}\label{eq:Kac_thm}
        d^\pi(\Tilde{x}) = \frac{1}{\mathbb{E}_\pi \left[ \tau_n \right]}.
    \end{equation}
\end{thm}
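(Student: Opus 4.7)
The plan is to combine the regenerative structure of the chain at $\tilde{x}$ with the ergodic theorem. By Assumption \ref{assump:Ergodicity} the chain induced by $\pi$ admits a unique stationary distribution $d^\pi$, and Birkhoff's pointwise ergodic theorem identifies $d^\pi(\tilde{x})$ with the almost-sure long-run frequency of visits, namely
\[
d^\pi(\tilde{x}) \;=\; \lim_{t\to\infty}\frac{1}{t}\sum_{k=0}^{t-1}\mathbf{1}\{x_k = \tilde{x}\} \qquad \mathbb{P}^\mu_\pi\text{-a.s.}
\]
The existence of this limit is precisely what the ergodicity assumption buys us; without it there would be no reason to equate a deterministic stationary mass with a path-wise frequency.

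Next, I would count visits by $N_t := \sum_{k=0}^{t-1}\mathbf{1}\{x_k = \tilde{x}\}$, so that $N_t\to\infty$ almost surely by positive recurrence. The strong Markov property applied successively at each $\tau_{n-1}$ shows that the inter-arrival increments $\xi_n := \tau_n - \tau_{n-1}$ are i.i.d.\ copies of the return time to $\tilde{x}$ from $\tilde{x}$, with common finite mean $m := \mathbb{E}_\pi[\tau_1]$ (finiteness is guaranteed by the bound $\mathbb{E}_\pi[\tau_n]<t'$ already noted in the excerpt). Writing $\tau_n = \sum_{i=1}^{n}\xi_i$, the classical strong law of large numbers yields $\tau_n/n \to m$ almost surely.

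Finally I would transfer this SLLN limit from the random index $n$ to the deterministic clock $t$ via the sandwich $\tau_{N_t} \leq t < \tau_{N_t+1}$, which gives
\[
\frac{\tau_{N_t}}{N_t}\;\leq\;\frac{t}{N_t}\;\leq\;\frac{\tau_{N_t+1}}{N_t+1}\cdot\frac{N_t+1}{N_t}.
\]
Since $N_t\to\infty$ almost surely, both outer bounds converge almost surely to $m$, so $N_t/t \to 1/m$ almost surely. Comparing this with the ergodic-theorem limit yields $d^\pi(\tilde{x}) = 1/\mathbb{E}_\pi[\tau_1]$, which is the desired identity. The main obstacle is exactly this sandwich step: one must justify that composing the SLLN with the path-dependent random index $N_t$ preserves the limit, and this is where monotonicity of $(\tau_n)$ together with the almost-sure divergence $N_t\to\infty$ are essential; by contrast, the ergodic-theorem half of the argument is essentially bookkeeping once Assumption \ref{assump:Ergodicity} is in hand.
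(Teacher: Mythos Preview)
The paper does not supply its own proof of this statement; it is quoted as a known result with a citation to \cite{meyn2012markov} and is used only as a tool in the proof sketch of Proposition~\ref{proposition}. There is therefore nothing to compare your argument against. That said, your renewal-theoretic proof is correct and is in fact the standard route: identify $d^\pi(\tilde{x})$ with the almost-sure visit frequency via the ergodic theorem, use the strong Markov property to obtain i.i.d.\ excursion lengths, apply the SLLN to $\tau_n/n$, and transfer the limit to $N_t/t$ through the sandwich $\tau_{N_t}\le t<\tau_{N_t+1}$.

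Two minor remarks. First, the paper's displayed identity has $\mathbb{E}_\pi[\tau_n]$ in the denominator, which, read literally with $\tau_n$ the time of the $n$-th visit, would grow linearly in $n$; your conclusion $d^\pi(\tilde{x})=1/\mathbb{E}_\pi[\tau_1]$, with $\tau_1$ interpreted as the return time from $\tilde{x}$ to itself, is the correct reading and is what the paper actually uses later. Second, since the chain is started from $\mu$ rather than from $\tilde{x}$, the first increment $\xi_1=\tau_1-\tau_0$ is a first-passage time and need not share the law of $\xi_2,\xi_3,\dots$; only the latter are i.i.d.\ return times by the strong Markov property. This does not affect your argument, because a single (almost surely finite) delay term leaves the limit $\tau_n/n\to m$ intact, but it is worth stating explicitly so that the i.i.d.\ claim is not overstated.
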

We will now shed light on how $\hat{V}$ and $\hat{g}$ are computed. Note that, $\hat{V} \approx \langle \rho, \psi(x,u) \rangle$, where $\psi: X \times U \to \mathbb{R}^l$ are the features of the value function and $\rho \in \mathbb{R}^l$ are the corresponding parameters updated via gradient descent in the direction that minimizes the following least squares error,
\begin{align}\label{eq:Value_funcation_approx}
    \rho_{t+1} &= \rho_{t} - \beta\nabla_\rho\mathcal{L}(\rho), \nonumber \\
    \mathcal{L}(\rho) &= \frac{1}{t}\sum^t_{n = 1}(\hat{V}(x_{t}) - G_t)^2,
\end{align}
where $\hat{V}(x_{t}) = \langle \rho_t, \psi(x_t,u_t) \rangle$. Substituting \eqref{eq:Value_funcation_approx} in \eqref{eq:TD_error} results in the following update rule,
\begin{align}\label{eq:TD_update}
    \hat{V}(x_{t}) &= \hat{V}(x_{t}) + \beta \delta_t, \nonumber \\
    \hat{g}_{t+1} &= \hat{g}_t + \beta_g \delta_t,
\end{align}
where $\beta$ and $\beta_g$ are step-sizes that satisfy the Robbins Monroe conditions \cite{borkar2022stochastic}. Thus, we have all the necessary ingredients for calculating the optimal policy in the inner loop. Now, we will focus on the outer loop of our algorithm for which Blackwell's Approachability Theorem for Markov Games is the key ingredient. This theorem gives the necessary condition for optimality in the sense of definition \ref{defn:Approachability}. In the case of convex target sets, the condition \eqref{eq:Blackwell_condition} given in Blackwell Approachability Theorem \ref{thm:Approachability_Theorem} is both necessary and sufficient for optimality. Consider the reward space $\mathbb{R}^k$. Let $s \in \mathbb{R}^k$ be any arbitrary point in the reward space. Let $\Pi_T(s)$ be the projection of $s$ on the Target set $T$. Then the unit vector $\lambda_s$ pointing in the direction of the Target set $T$ from $s$ is 
\begin{equation}\label{eq:Unit_vector}
    \lambda_s = \frac{\Pi_T(s) - s}{\norm{\Pi_T(s) - s}}.
\end{equation}
We now have all the ingredients for stating the Approachability Theorem for Markov Games.
\begin{thm}[Approachability Theorem for Markov Games \cite{shimkin1993guaranteed}]\label{thm:Approachability_Theorem}
    Given the Assumption \ref{assump:Ergodicity}. Player $1$ can \emph{Approach} the set $T \subset \mathbb{R}^k$ as per definition \ref{defn:Approachability} if the following condition holds: For any point $s \notin T$, there exists a policy $\pi^1_{s}$ such that,
    \begin{equation}\label{eq:Blackwell_condition}
        \langle \eta(\pi^1_s, \pi^2_s) - \Pi_T(s), \lambda_s \rangle \geq 0, \text{  for any } u^2 \in U^2.
    \end{equation}
    Furthermore, the \emph{Approaching} policy is chosen as follows: If $x_t = \Tilde{x}$ and the average reward vector upto time $t$, $\Bar{\mathbf{r}}_t \notin T$, play $\pi^1_{\Bar{\mathbf{r}}_t}$ until the next visit to state $\Tilde{x}$; otherwise play arbitrarily until the next return to state $\Tilde{x}$.
\end{thm}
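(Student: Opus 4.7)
The plan is to adapt the classical Blackwell potential-function argument to the recurrent-state setting, using the squared distance $D^2(\bar{\mathbf{r}}_{\tau_n}, T)$ along the subsequence of recurrence times as a Lyapunov-like function, and then to lift convergence to the full time axis. Write $s_n := \bar{\mathbf{r}}_{\tau_n}$, $p_n := \Pi_T(s_n)$, and let $\eta_n^{\text{emp}} := (\tau_{n+1}-\tau_n)^{-1}\sum_{t=\tau_n}^{\tau_{n+1}-1} R(x_t, u_t)$ denote the empirical epoch average. Then the time-average at the next recurrence is the convex combination
\begin{equation*}
    s_{n+1} = (1-\alpha_n)\,s_n + \alpha_n\,\eta_n^{\text{emp}}, \qquad \alpha_n := \frac{\tau_{n+1}-\tau_n}{\tau_{n+1}},
\end{equation*}
and since $p_n$ is the Euclidean projection of $s_n$ onto the convex set $T$, $D^2(s_{n+1}, T) \leq \|s_{n+1}-p_n\|^2$ expands into
\begin{equation*}
    D^2(s_{n+1}, T) \leq (1-\alpha_n)^2\, D^2(s_n, T) + 2\alpha_n(1-\alpha_n)\langle s_n - p_n,\, \eta_n^{\text{emp}} - p_n\rangle + \alpha_n^2\, \|\eta_n^{\text{emp}} - p_n\|^2.
\end{equation*}

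The next step is to use the Blackwell condition \eqref{eq:Blackwell_condition}. Because $\lambda_{s_n} = (p_n - s_n)/D(s_n, T)$ points from $s_n$ toward $p_n$, the hypothesis $\langle \eta(\pi^1_{s_n}, \pi^2) - p_n, \lambda_{s_n}\rangle \geq 0$ is equivalent to $\langle s_n - p_n,\, \eta(\pi^1_{s_n}, \pi^2) - p_n\rangle \leq 0$ for every Player~2 policy $\pi^2$. Hence the cross term is nonpositive once $\eta_n^{\text{emp}}$ is replaced by its conditional mean; using the strong Markov property at $\tilde{x}$ and the regenerative identity $\mathbb{E}\bigl[\sum_{t=\tau_n}^{\tau_{n+1}-1}R_t \mid \mathcal{F}_{\tau_n}\bigr] = \mathbb{E}[\tau_{n+1}-\tau_n \mid \mathcal{F}_{\tau_n}]\cdot \eta(\pi^1_{s_n}, \pi^2)$, the fluctuation $\eta_n^{\text{emp}} - \eta(\pi^1_{s_n}, \pi^2)$ becomes a martingale-difference sequence with finite per-epoch variance under Assumption~\ref{assump:Ergodicity} and bounded rewards. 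Combined with $\tau_n \sim n/d^\pi(\tilde{x})$ from Theorem~\ref{thm:Kac}, one has $\alpha_n = \Theta(1/n)$, so a Robbins--Siegmund supermartingale argument applied to the recursion above yields $D(s_n, T) \to 0$ almost surely. For arbitrary $t \in [\tau_n, \tau_{n+1})$, boundedness of $R$ together with finite $\mathbb{E}[\tau_{n+1}-\tau_n]$ gives $\|\bar{\mathbf{r}}_t - \bar{\mathbf{r}}_{\tau_n}\| \to 0$ a.s., extending the conclusion to $D(\bar{\mathbf{r}}_t, T)\to 0$ for all $t$ and thereby establishing Definition~\ref{defn:Approachability}.

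The main obstacle will be the martingale control of the cross term: the empirical epoch average $\eta_n^{\text{emp}}$ is built from a random number of rewards generated by a policy $\pi^1_{s_n}$ that itself depends on the accumulated history, so the ergodic/law-of-large-numbers estimates must be applied within each epoch uniformly over the (bounded) parameter $s_n$. In particular, ensuring $\sum_n \alpha_n^2 \mathbb{E}\|\eta_n^{\text{emp}} - \eta(\pi^1_{s_n}, \pi^2)\|^2 < \infty$ requires a uniform-in-$s$ second-moment bound on the return time across the family $\{\pi^1_s\}$, which is the principal technical step that Assumption~\ref{assump:Ergodicity} and continuity of the policy parameterization must be shown to deliver.
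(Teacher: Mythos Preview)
The paper does not supply its own proof of this theorem: it is quoted, with attribution to \cite{shimkin1993guaranteed}, as an existing result and then used as a black box in the design of Algorithm~\ref{alg:mo-pg-cdrl} and the argument for Proposition~\ref{proposition}. There is therefore no in-paper proof to compare your proposal against.

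For what it is worth, your sketch is exactly the standard route that underlies Shimkin's original argument: track the squared distance $D^2(\bar{\mathbf{r}}_{\tau_n},T)$ along the recurrence subsequence, write $s_{n+1}$ as a convex combination of $s_n$ and the epoch empirical average, use the halfspace condition~\eqref{eq:Blackwell_condition} to make the cross term nonpositive in (conditional) mean, handle the fluctuation via a martingale-difference/Robbins--Siegmund step with $\alpha_n=\Theta(1/n)$, and finally lift from $\{\tau_n\}$ to all $t$ using bounded rewards and finite expected cycle length. Two small remarks: you invoke convexity of $T$, but the bound $D^2(s_{n+1},T)\le\|s_{n+1}-p_n\|^2$ only needs $p_n\in T$, so the sufficiency direction works for closed $T$; and the ``regenerative identity'' you write is not literally correct as stated, since $\mathbb{E}[\eta_n^{\text{emp}}\mid\mathcal{F}_{\tau_n}]\neq \eta(\pi^1_{s_n},\pi^2)$ in general (ratio of expectations versus expectation of a ratio). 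The usual fix, which you implicitly anticipate in your final paragraph, is to work with the unnormalized cycle sums and control numerator and denominator separately, and this is precisely where the uniform-in-$s$ second-moment bound on return times that you flag becomes the essential technical input.
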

The Blackwell condition \eqref{eq:Blackwell_condition} can be satisfied in the following way. We strive to choose $\pi^1$ in such a way that for any $\lambda$, the inner product $\langle \eta(\pi^1_s, \pi^2_s) - \Pi(s), \lambda_s \rangle = 0, \text{  for any } u^2 \in U^2$. Note that $\eta$ is updated each time we see the state $\Tilde{x}$, that is, after recurrence time $\tau_n$. This time is the update time for the outer loop. At time $\tau_n$, we update the scalarizing vector $\lambda$ by calculating the projection of $\Bar{r}$ onto the target set $T$. This leads us to the inner loop, where Player $1$ applies the Policy Gradient algorithm based on the equations \eqref{eq:Policy_Gradient}-\eqref{eq:TD_update} to find the Policy for the given direction $\lambda$. We have summarized these steps in the following Algorithm \ref{alg:mo-pg-cdrl}. 
\begin{algorithm}
\caption{Multi-Objective Policy Gradient}
\label{alg:mo-pg-cdrl}
\begin{algorithmic}[1]
\Require Target set ${T}$, Policy and value features $\phi, \psi$, Recurrence state $\Tilde{x}$, Initial distribution $\mu$, Control action sets $U^1$, $U^2$, Step sizes $\alpha, \beta, \beta_g$; Tolerance parameters $\varepsilon_{\mathrm{proj}}$

\Ensure Long-run Average reward Trajectory $\Bar{\mathbf{r}}_t \in T$ 

\State \textbf{Initialize:} $x_0 \sim \mu $; $u^1_0, u^2_0 \gets 0$; $\Bar{\mathbf{r}}_0 \gets R(x_0, u_0)$; $b \gets 0$; $\theta \gets \mathbf{0}$
\For{$n = 1, \cdots$} \Comment{Outer Loop}
    \State $\mathbf{p} \gets \Pi(\Bar{\mathbf{r}})$ 
    \State $d \gets \norm{\Bar{\mathbf{r}} - \mathbf{p}}$
    \If{$d \le \varepsilon_{\mathrm{proj}}$}
        \State $\lambda_{\Bar{\mathbf{r}}} \gets \mathbf{0}$
    \Else
        \State $\lambda_{\Bar{\mathbf{r}}} \gets \dfrac{\mathbf{p} - \Bar{\mathbf{r}}}{\norm{\mathbf{p} - \Bar{\mathbf{r}}}}$
    \EndIf
    \State $G_n \gets 0$ 
    \For{$t \gets 1$ \textbf{to} $\tau_n$} \Comment{Inner Loop}
        \State $u^1_t \sim \pi^1_t$ 
        \Statex \hspace{0.95cm} \textit{Adversery chooses the worst projected point}
        \State $u^2_t \gets \arg\min_{u^2 \in U^2}\, \langle \eta(u^1_t, u^2) - \Pi(\mathbf{\Bar{r}}), \lambda_{\mathbf{\Bar{r}}} \rangle$
        \State Sample $x_t$ from the environment 
        \State $G_t \gets G_{t-1} + r_t$
        \State $\delta_t \gets r_t - \hat{g}_{t-1} + \hat{V}_{t-1}(x_{t+1}) - \hat{V}_{t-1}(x_{t})$
        \State $\mathcal{L} \gets (1/t)\sum^t_{i = 1} (\hat{V}_{t-1}(x_{t}) - G_{t-1})^2$
        \State $\rho_{t} \gets \rho_{t-1} - \beta\nabla_\rho\mathcal{L}$
        \State $\hat{V}_{t}(x_{t}) \gets \hat{V}_{t-1}(x_{t}) + \beta \delta_t$
        \State $\hat{g}_{t} \gets \hat{g}_{t-1} + \beta_g \delta_t$
        \State $\theta_{t} \gets \theta_{t-1} + \alpha \nabla_\theta \log \pi^1_t \delta_t$
    \EndFor
    \State $\eta \gets G/\tau_n$
    \State $n \gets n + 1$
    % \Statex \textit{Empirical approachability check}
    % \If{$\langle \eta - \mathbf{p}, \lambda_{\mathbf{\Bar{r}}} \rangle \leq 0$}
    %     \State $\text{gap} \gets \langle \mathbf{p} - \eta, \lambda_{\mathbf{\Bar{r}}}\rangle$ 
    %     \State Repeat the inner loop
    % \EndIf
    % At the conclusion of the inner loop, the algorithm does an empirical check to verify whether Blackwell's condition \eqref{eq:Blackwell_condition} has been satisfied. If it has not been satisfied, then the algorithm repeats the inner loop.
\EndFor
\end{algorithmic}
\end{algorithm}
In Algorithm \ref{alg:mo-pg-cdrl}, the outer loop runs at timescale $n$, where $n$ is incremented each time we observe the recurrent state $\Tilde{x}$, while the inner loop runs on timescale $t$ that resets after $\tau_n$ is reached. Unlike the standard Approachability theorem \cite{blackwell1956analog}, the Approachability condition $\eqref{eq:Blackwell_condition}$ needs to be satisfied for a sequence of states.
\begin{prop}\label{proposition}
    Algorithm \ref{alg:mo-pg-cdrl} ensures convergence of the long-run average reward vector to the Target set $T$
\end{prop}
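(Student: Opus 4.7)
The plan is to verify that the hypothesis of the Approachability Theorem for Markov Games (Theorem \ref{thm:Approachability_Theorem}) is met at every outer iteration of Algorithm \ref{alg:mo-pg-cdrl}, and then to invoke that theorem to conclude convergence of $\Bar{\mathbf{r}}_t$ to $T$. This decomposes into two tasks: first, show that the inner loop asymptotically produces a policy $\pi^1$ satisfying the Blackwell condition \eqref{eq:Blackwell_condition} for the frozen scalarization direction $\lambda_{\Bar{\mathbf{r}}}$; second, combine these per-epoch guarantees across outer iterations via a stochastic approximation argument on a Lyapunov function built from $D(\Bar{\mathbf{r}}_{\tau_n}, T)$.

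For the first task, I would fix an outer epoch $n$ and condition on the trajectory up to $\tau_{n-1}$. The outer loop freezes $\lambda_n := \lambda_{\Bar{\mathbf{r}}_{\tau_{n-1}}}$ and runs the actor--critic scheme on the scalarized reward $r = \langle \mathbf{r}, \lambda_n \rangle$ against the worst-case adversary (Algorithm \ref{alg:mo-pg-cdrl}, line 13). By Theorem \ref{thm:PG_theorem} together with the compatible function approximation \eqref{eq:Value_funcation_approx}--\eqref{eq:TD_update}, the actor parameter update follows an unbiased estimate of the gradient of $\langle \eta(\pi^1,\pi^2), \lambda_n \rangle$, where the expectation over $d^\pi$ in \eqref{eq:Gradient} is estimated by the between-recurrence empirical average justified by Kac's Theorem \ref{thm:Kac}. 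Under the Robbins--Monro conditions on $\alpha, \beta, \beta_g$, standard two-timescale stochastic approximation results \cite{borkar2022stochastic} yield convergence of $\theta$ to a stationary saddle point, at which the Blackwell inner product satisfies $\langle \eta - \Pi_T(\Bar{\mathbf{r}}_{\tau_{n-1}}), \lambda_n \rangle \ge 0$ for every $u^2 \in U^2$, i.e.\ exactly \eqref{eq:Blackwell_condition}.

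For the second task, I would introduce the Lyapunov function $W_n := D(\Bar{\mathbf{r}}_{\tau_n}, T)^2$ and expand it along one epoch. Using the definition \eqref{eq:Unit_vector} of $\lambda_n$ and the identity $\Bar{\mathbf{r}}_{\tau_n} = (1-\gamma_n) \Bar{\mathbf{r}}_{\tau_{n-1}} + \gamma_n \eta_n$, with $\gamma_n = (\tau_n - \tau_{n-1})/\tau_n$, yields
\begin{align*}
W_{n+1} - W_n \le -2\gamma_n \bigl\langle \eta_n - \Pi_T(\Bar{\mathbf{r}}_{\tau_{n-1}}),\, \lambda_n \bigr\rangle + \varepsilon_n ,
\end{align*}
where $\varepsilon_n$ aggregates the finite-sample error in approximating $\eta_n$ (controlled via Kac's Theorem \ref{thm:Kac}, since $\mathbb{E}_\pi[\tau_n] = 1/d^\pi(\Tilde{x}) < \infty$ under Assumption \ref{assump:Ergodicity}) together with the residual actor--critic estimation error. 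The first step drives the drift term to be asymptotically nonpositive, while the Robbins--Monro summability of $\alpha,\beta,\beta_g$ gives $\sum_n \varepsilon_n < \infty$. The Robbins--Siegmund near-supermartingale theorem then yields $W_n \to W_\infty$ almost surely, and a subsequence argument rules out $W_\infty > 0$: on any such event the inner-loop policy guarantees a strictly negative drift uniformly bounded away from zero, contradicting boundedness of $\sum_n W_{n+1} - W_n$.

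The main obstacle is the tight coupling between the two timescales: within each epoch only $\tau_n - \tau_{n-1}$ actor--critic updates occur, $\tau_n$ is a random, policy-dependent stopping time, and the outer projection in line 3 changes $\lambda$ precisely when the inner loop would otherwise drive the iterate to its saddle point. The cleanest way to sidestep this is to take $\alpha \ll \beta,\beta_g$ so that the critic tracks $V_\pi$ quasi-statically and the actor moves on a slower timescale than the outer projection, and to lean on Kac's Theorem to bound the per-epoch bias by $O(1/\mathbb{E}_\pi[\tau_n])$ uniformly in the policies visited. Verifying the uniform integrability of $\tau_n$ along the learned policy sequence, which is what ultimately makes Assumption \ref{assump:Ergodicity} do the heavy lifting, is the most delicate bookkeeping step of the argument.
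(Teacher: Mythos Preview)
Your plan is sound and overlaps with the paper's argument in its first task: both you and the paper show that the inner policy-gradient loop, via Theorem~\ref{thm:PG_theorem} and Kac's Theorem~\ref{thm:Kac}, drives the scalarized average reward to (at least) the value of the projected game, so that the Blackwell condition~\eqref{eq:Blackwell_condition} holds for the frozen direction $\lambda_n$. The paper then simply stops, invoking Theorem~\ref{thm:Approachability_Theorem} as a black box to conclude approachability.

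Your second task goes a genuinely different route: instead of citing Theorem~\ref{thm:Approachability_Theorem}, you unfold its proof by tracking the Lyapunov function $W_n = D(\Bar{\mathbf{r}}_{\tau_n},T)^2$ and applying Robbins--Siegmund, while carrying along the per-epoch estimation error $\varepsilon_n$ from the inexact inner loop. This buys you something the paper's sketch does not attempt: an explicit mechanism for absorbing the fact that the inner actor--critic only \emph{approximately} satisfies~\eqref{eq:Blackwell_condition} at any finite $n$, and for handling the random, policy-dependent epoch lengths. The paper's Step~3 simply asserts that the inner loop attains the sup--inf value, which is an idealization your near-supermartingale framing is designed to relax. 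Conversely, the paper's route is shorter precisely because Theorem~\ref{thm:Approachability_Theorem} already packages the Lyapunov descent; your approach re-derives that descent but in exchange gains transparency about where the stochastic-approximation errors enter and what uniform-integrability of $\tau_n$ is really being used for.
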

\begin{proof}[Sketch of proof]
    % Fix a recurrent state $\Tilde{x}$. Firstly, note that, as a consequence of Assumption \ref{assump:Ergodicity}, $\eta(\pi) \approx \Bar{r}(\pi)$ in the gradient update \eqref{eq:Gradient}. This results in,
    % \begin{align}\label{eq:Kac_Gradient}
    %     \frac{\partial \eta(\pi)}{\partial \theta} &= \sum_x d^\pi(x) \sum_{u}\frac{\partial \log \pi_\theta(x, u)}{\partial \theta}\delta(x, u), \nonumber \\
    %     &= \mathbb{E}_{d^\pi} \left [ \frac{\partial \log \pi_\theta(x, u)}{\partial \theta} \langle r_t - \eta + \hat{V}(x') - \hat{V}(x), \lambda \rangle \right], \nonumber \\
    %     &\approx \mathbb{E}_{d^\pi} \left [ \frac{\partial \log \pi_\theta(x, u)}{\partial \theta} \langle r_t - \eta, \lambda \rangle \right],
    % \end{align}
    % where the last approximate equality is because the policy gradient loop runs from a fixed recurrent state $\Tilde{x}$ and terminates when we observe $\Tilde{x}$ again. Thus, we update the policy in the direction that satisfies the inequality \eqref{eq:Blackwell_condition}.
    Fix a recurrent state $\Tilde{x}$. The Policy gradient update rule \eqref{eq:Policy_Gradient} updates policy parameter $\theta$ in the direction of maximizing the long run average reward $\langle \Bar{\mathbf{r}}(\pi), \lambda \rangle$. The sequence of recurrent times
    \begin{equation*}
        \tau_1, \cdots , \tau_n \to \tau = \frac{1}{d^\pi(\Tilde{x})}, \text{ as } n \to \infty,
    \end{equation*}
    for the recurrent state $\Tilde{x}$ due to Kac's Theorem \ref{thm:Kac}. Therefore, $\Bar{\mathbf{r}}(\pi) = \eta(\pi)$ as $n \to \infty$. This implies satisfaction of Blackwell condition \eqref{eq:Blackwell_condition} which can be shown as follows. \newline
    \textbf{Step 1}: Without loss of generality, assume that $\eta_0 \notin T$ (as otherwise Player $1$ can play arbitrarily as per Theorem \ref{thm:Approachability_Theorem}), therefore the Euclidean distance between $\eta_0$ and $\Pi_T(\eta_0)$ is positive. \newline
    \textbf{Step 2}: For any policy $\pi$, the gradient ${\partial \Bar{r}(\pi)}/{\partial \theta}$ satisfies Policy Gradient Theorem \ref{thm:PG_theorem},
    \begin{equation*}
        \frac{\partial \Bar{r}(\pi)}{\partial \theta} = \sum_x d^\pi(x) \sum_{u}\frac{\partial \log \pi_\theta(x, u)}{\partial \theta}\delta(x, u),
    \end{equation*}
    where $\delta$ is the TD error \eqref{eq:TD_error} for scalarized average reward $\Bar{r}(\pi) = \langle \eta(\pi), \lambda_{\eta(\pi)}\rangle$.\newline
    \textbf{Step 3}: Gradient ascent on $\Bar{r}(\pi)$ increases $\langle \eta(\pi), \lambda_{\eta(\pi)}\rangle$. This can be shown as follows. Let $\eta_{n+1} = \eta(\pi_{\text{new}})$, where $\pi_{\text{new}}$ is obtained at the conclusion of the inner loop in Algorithm \ref{alg:mo-pg-cdrl}, and $\eta_{n} = \eta(\pi_{\text{old}})$, where $\pi_{\text{old}}$ is the previous policy. Then we can show that,
    \begin{align*}
        \langle \eta_{n+1}, \lambda_{\eta_{n+1}} \rangle \geq \langle \eta_{n}, \lambda_{\eta_{n}}\rangle.
    \end{align*}
     Thus, after the inner loop, the controller achieves at least the value of the scalarized game in direction $\lambda_{\eta_{n+1}}$: \begin{equation*}
     \langle \eta_{n+1}, \lambda_{\eta_{n+1}} \rangle \geq \sup_{\pi^1} \inf_{\pi^2} \langle \eta(\pi^1,\pi^2), \lambda_\eta \rangle.
     \end{equation*}
     \textbf{Step 4}: Theorem \ref{thm:Approachability_Theorem} requires that for every $\eta \notin T$, there exists a policy $\pi^1$ such that for all adversary policies $\pi^2$: $\langle \eta(\pi^1,\pi^2) - \Pi_T(\eta), \lambda_\eta \rangle \geq 0$. Since $\lambda_\eta$ is the unit normal from $\eta$ to $\Pi_T(\eta)$, this condition is equivalent to, 
     \begin{equation*}
         \langle \eta_{n+1}, \lambda_{\eta_{n+1}} \rangle \geq \sup_{\pi^1} \inf_{\pi^2} \langle \eta(\pi^1,\pi^2), \lambda_\eta \rangle \geq \langle \Pi_T(\eta), \lambda_\eta \rangle.
     \end{equation*}
  %   Firstly, we assume that $\eta_0 \notin T$ (without loss of generalization, as otherwise Player $1$ can play arbitrarily as per Theorem \ref{thm:Approachability_Theorem}), therefore the Euclidean distance between $\eta_0$ and $\Pi_T(\eta_0)$ is positive. The Policy Gradient inner loop maximizes the inner product $\langle \eta_0 - \Pi(\eta_0), \lambda_{\eta_0} \rangle$. Let $\eta_{n+1} = \eta(\theta_{new})$ where $\theta_{new}$ is obtained at the conclusion of the inner loop in Algorithm \ref{alg:mo-pg-cdrl}. Then,
  %   \begin{align*}
  %       \langle \eta_{n+1} - \Pi_T(\eta), \lambda_\eta \rangle &= \langle \eta_{n+1} - \eta_n + \eta_n - \Pi_T(\eta), \lambda_\eta \rangle, \\
  %       &= \langle \eta_{n+1} - \eta_n, \lambda_\eta \rangle + \langle \eta_n - \Pi_T(\eta), \lambda_\eta \rangle,
  %   \end{align*}
  % where the first term is positive since we take gradient step in the direction of maximizing $\langle \eta(\pi), \lambda \rangle$, and the second term is the previous distance to the Target set $T$ which is also positive by inductive arguments. Therefore, the total inner product is,
  %   \begin{equation*}
  %       \langle \eta_{n+1} - \Pi_T(\eta), \lambda_\eta \rangle \geq 0.
  %   \end{equation*}
    Thus condition \eqref{eq:Blackwell_condition} is satisfied.
\end{proof}

\section{Numerical results}

We have applied Algorithm \ref{alg:mo-pg-cdrl} on a simple numerical toy example that simulates temperature ($\degree C$) and relative humidity ($\%$) in a house similar to the one in \cite{mannor2004geometric} for the sake of a direct comparison. The aim of Player $1$ (controller) is to control both temperature and relative humidity (states of the system) without any knowledge of the underlying dynamics (modeled as moving averages \cite{mustafaraj2010development,zaharieva2025method}). However, both temperature and relative humidity are extremely correlated, as increasing or decreasing one will affect the value of the other state. Thus, the problem is multiobjective with the rewards $\mathbf{r} \in \mathbb{R}^2$ that are chosen to be the same as the observed states. The desired target set $T \in \mathbb{R}^2$ is defined based on the comfort of the house occupants. The state space is continuous (the grid in $\mathbb{R}^2$), while the control action space for the controller consists of three pure policies $u^1_0, u^1_1, u^1_2$. The control action space of the adversary is continuous and can be any point on the lines (corresponding to $u^1_0, u^1_1, u^1_2$) selected by the controller (see figure \ref{fig:Approaching_T}). From the figure \ref{fig:Approaching_T}, it can be seen that the long-run average reward vector \emph{Approaches}  the interior of the Target set $T$.
\begin{figure}[ht]
    \centering
    \includegraphics[width=1\linewidth]{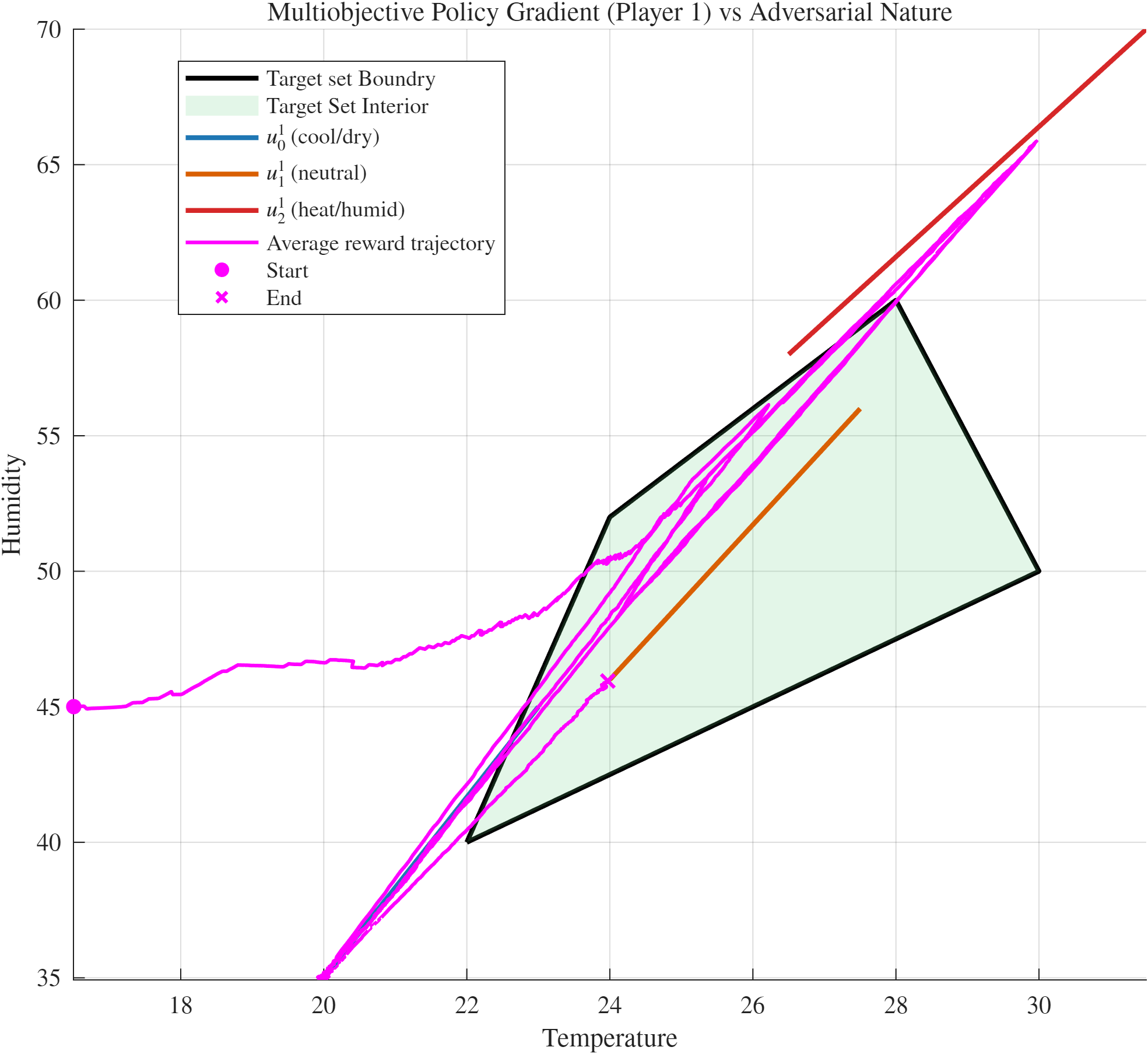}
    \caption{The long run average reward vector (in pink color) \emph{Approaches} the target set $T$ despite the adversary choosing worst-case points for the policies ($u^1_0$, $u^1_1$, $u^1_2$) selected by Player $1$.}
    \label{fig:Approaching_T}
\end{figure}

\section{Conclusions}
We have developed an algorithm for multi-objective RL that is robust to adversarial disturbances, and we show that the policy gradient approach can be used for satisfying the Blackwell condition in Proposition \ref{proposition}. The proposed algorithm has been simulated on a numerical example which also demonstrates that the long-run average reward vector converges to the Target set. Future work will involve extending the proposed approach to a more general setting with many players. Another important research direction is to study sample efficiency of the algorithm presented in this paper. As it stands now, the guarantees hold asymptotically. 

% \section*{Acknowledgment}
% \noindent This research was supported by Poul Due Jensen Fonden under the project SWIFT.

%\addtolength{\textheight}{-12cm}   % This command serves to balance the column lengths
                                  % on the last page of the document manually. It shortens
                                  % the textheight of the last page by a suitable amount.
                                  % This command does not take effect until the next page
                                  % so it should come on the page before the last. Make
                                  % sure that you do not shorten the textheight too much.

%%%%%%%%%%%%%%%%%%%%%%%%%%%%%%%%%%%%%%%%%%%%%%%%%%%%%%%%%%%%%%%%%%%%%%%%%%%%%%%%

%%%%%%%%%%%%%%%%%%%%%%%%%%%%%%%%%%%%%%%%%%%%%%%%%%%%%%%%%%%%%%%%%%%%%%%%%%%%%%%%

%%%%%%%%%%%%%%%%%%%%%%%%%%%%%%%%%%%%%%%%%%%%%%%%%%%%%%%%%%%%%%%%%%%%%%%%%%%%%%%%
%\section*{APPENDIX}

%Appendixes should appear before the acknowledgment.

%\section{ACKNOWLEDGMENT}

%%%%%%%%%%%%%%%%%%%%%%%%%%%%%%%%%%%%%%%%%%%%%%%%%%%%%%%%%%%%%%%%%%%%%%%%%%%%%%%%

%References are important to the reader; therefore, each citation must be complete and correct. If at all possible, references should be commonly available publications.

\bibliographystyle{IEEEtran}

\bibliography{mybib.bib}

\end{document}